\DeclareMathAlphabet{\mathpzc}{OT1}{pzc}{m}{it}
\newcommand*{\defeq}{\mathrel{\vcenter{\baselineskip0.5ex \lineskiplimit0pt
                     \hbox{\scriptsize.}\hbox{\scriptsize.}}}%
                     =}
\newcommand{\virgolette}{``}
\theoremstyle{definition}
\theoremstyle{plain}
\newtheorem{thm}{Theorem}
\newtheorem{prop}[thm]{Proposition}
\theoremstyle{remark}
\newtheorem*{remark}{Remark}
\begin{document}

\begin{flushright}
\par\end{flushright}
\vspace{1cm}

\title{The Geometry of Picture Changing Operators}
\date{}

\author{C.\ A.\ Cremonini}

\email{carlo.alberto.cremonini@gmail.com}

\address{Faculty of Mathematics and Physics, Mathematical Institute, Charles University Prague, Sokolovska 49/83, 186 75 Prague}

\begin{abstract}

This note aims at clarifying some mathematical aspects of what is known in Physics as \emph{Picture Changing Operator} (PCO). In particular, we want to show that PCOs are cochain maps between the complex of differential forms (or superforms) and the complex of integral forms on a given supermanifold. We comment on the construction of (super)symmetric PCOs in terms of chain homotopies and we provide some physically relevant examples of applications.

\end{abstract}

\maketitle
\setcounter{footnote}{0}

\section*{Introduction}

The main goal of this short note is to clarify the mathematical meaning of what is known in Physics as \emph{Picture Changing Operator} (henceforth PCO). PCOs were introduced for the first time in \cite{FMS} as a device to shift from a given ghost representation to another, in the context of superconformal field theory, each of them leading to a complex with the operator given by the BRST differential. This device was essential in order to define a meaningful expression when evaluating string amplitudes via integration of correlators. These operators were first introduced via \virgolette bosonisation'', i.e., when fermions are described in terms of a chiral boson. Later, it was realised that such a bosonised fermion could be interpreted as a formal Dirac delta distribution of the bosonic ghost field. These operators were introduced in Supergeometry in \cite{Belopolsky}, where the translation from physical to mathematical apparatus is explained. In particular, in this paper, the author presents the realisation of integral forms in terms of Dirac deltas of (even) differential forms $d \theta$, in analogy to the interpretation of the bosonised ghost described above, and in this context, geometric PCOs are presented. This realisation of integral forms and PCOs, already established in superstring theory, was later reviewed in \cite{Witten} where the basics of Supergeometry, with an eye towards integration theory, are neatly presented. On the other hand, the rigorous aspects of supermanifolds, integral forms, integration theory etc., were already well-founded in the mathematical community. We refer to the book \cite{Manin} for a quick yet deep introduction to Supergeometry, or to the more recent \cite{Noja} for an up-to-date review of the subject, with a vast bibliography of fundamental papers (see also \cite{CNR}).

In this note\footnote{We will try to follow the notations of \cite{Manin, Witten, Noja}.}, we want to build a bridge between the mathematical and physical worlds by introducing the notion of Picture Changing Operators in a rigorous way. The idea is inherited from Physics: PCOs are used to go from one ghost vacuum to another and then from a realisation of physical states to another, or, analogously, from a BRST complex to another. This means that PCOs can be introduced as cochain maps between complexes, in this case, between the complex of differential forms and the complex of integral forms. We will explicitly construct the cochain map associated with the \virgolette space-time'' PCO and relate it to the (canonical) embedding of the reduced submanifold into the ambient supermanifold. We will then show that this map defines a quasi-isomorphism, i.e., an isomorphism when restricted to cohomology. This, in particular, gives an explicit, and very intuitive, realisation of the well-known isomorphism between superform and integral form cohomologies. We will then give an interpretation of different PCOs appearing in literature, e.g., respecting some (super)symmetries, in terms of chain homotopies. These topics constitute the core of this paper. Finally, we provide some explicit examples and some applications of physical interest. We conclude with a glimpse at future developments, focused on the rigorous definition of \virgolette pseudoforms'' (see again \cite{Witten}) which could lead, in the opinion of this author, to interesting applications both in Physics and in Mathematics.

\section{Picture Changing Operators}

This section is the core of these notes and is devoted to the constructive introduction of PCOs and their properties. In particular, given a supermanifold $\mathcal{SM}$, we want to show that PCOs are cochain maps between the complex of differential forms (or superforms) and the complex of integral forms on $\mathcal{SM}$. We will prove the following statement that sets the correspondence between the physical notation and the mathematical definition of PCOs:
\begin{prop}
	Given a supermanifold $\mathcal{SM}$ with $\dim \mathcal{SM} = (m|n)$, a \emph{Picture Changing Operator} (henceforth PCO) is a cohomology class $\mathbb{Y}^{(0|n)} \in H^{(0|n)} \left( \mathcal{SM} , d_{dR} \right)$, in the distributional realisation of integral forms; a PCO allows to obtain an integral form out of a superform by $\wedge$ multiplication and formal relations of distributions. PCOs correspond to cochain maps $f^{(\bullet)} : \left( \Omega^\bullet \left( \mathcal{SM} \right) , d_{dR} \right) \to \left( \Sigma^\bullet \left( \mathcal{SM} \right) , \delta \right)$, where $\Sigma^\bullet \left( \mathcal{SM} \right) \defeq \mathpzc{B}er \left( \mathcal{SM} \right) \otimes_{\mathcal{O}_{\mathcal{SM}}} S^{m-\bullet} \Pi \mathcal{T} \mathcal{SM}$ are integral forms (as defined, e.g., in \cite{Manin, Noja}):
	\begin{center}
		\begin{tikzcd}[cells={nodes={minimum height=2em}}]
			0 \arrow[r] & \mathcal{O}_\mathcal{SM} \arrow[r,"d_{dR}"] \arrow[d,"f^{(0)}"] &  \Omega^{(1)} \left( \mathcal{SM} \right) \arrow[r,"d_{dR}"] \arrow[d,"f^{(1)}"] &  \ldots \arrow[r,"d_{dR}"] & \Omega^{(m)} \left( \mathcal{SM} \right) \arrow[r,"d_{dR}"] \arrow[d,"f^{(m)}"] & \ldots  \\
			\ldots \arrow[r] & \Sigma^{(0)} \left( \mathcal{SM} \right) \arrow[r,"\delta"] & \Sigma^{(1)} \left( \mathcal{SM} \right) \arrow[r,"\delta"] & \ldots \arrow[r,"\delta"] & \Sigma^{(m)} \left( \mathcal{SM} \right) \arrow[r,"\delta"] & 0
		\end{tikzcd}
	\end{center}
	In particular, we have that $\mathbb{Y}^{(0|n)} \in H^{(0|n)} \left( \mathcal{SM} , d_{dR} \right)$ corresponds to the fact that $f^{(\bullet)}$ is a quasi-isomorphism. Finally, changing the representative of the class $\mathbb{Y}^{(0|n)}$ by adding $d_{dR}$-exact terms, corresponds to chain homotopies for $f^{(\bullet)}$.
\end{prop}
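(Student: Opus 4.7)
The plan is to fix the distributional realisation of integral forms following \cite{Belopolsky, Witten}, in which a form of picture $q$ is a polynomial in the $d\theta^{\alpha}$ multiplied by formal distributions $\delta^{(k)}(d\theta^{\alpha})$ subject to the standard rules $d\theta^{\alpha}\delta(d\theta^{\alpha})=0$ and $d\theta^{\alpha}\delta'(d\theta^{\alpha})=-\delta(d\theta^{\alpha})$. In this realisation the differential $\delta$ on $\Sigma^{\bullet}$ coincides with $d_{dR}$, and given any closed representative $\mathbb{Y}^{(0|n)}$ of the class the obvious candidate is wedge multiplication
\[
 f^{(k)}:\Omega^{(k)}(\mathcal{SM})\longrightarrow\Sigma^{(k)}(\mathcal{SM}),\qquad f^{(k)}(\omega):=\omega\wedge\mathbb{Y}^{(0|n)}.
\]
Well-definedness is immediate from degree counting: a form of degree $k$ and picture $0$ wedged with a form of degree $0$ and picture $n$ lands in form degree $k$ and picture $n$, i.e.\ in $\Sigma^{(k)}$. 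The intertwining with the differentials is a one-line consequence of the graded Leibniz rule together with $d_{dR}\mathbb{Y}^{(0|n)}=0$; the closedness condition is exactly what upgrades wedging with $\mathbb{Y}$ into a morphism of cochain complexes.

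For the quasi-isomorphism part, I would single out the canonical \virgolette space-time'' representative
\[
 \mathbb{Y}^{(0|n)}=\theta^{1}\cdots\theta^{n}\,\delta(d\theta^{1})\cdots\delta(d\theta^{n})
\]
and interpret the associated $f^{(\bullet)}$ geometrically as the Poincaré-dual push-forward along the canonical embedding $\iota:M_{\mathrm{red}}\hookrightarrow\mathcal{SM}$ of the reduced submanifold. Composing with the classical pullback $\iota^{*}$ one recovers, at the level of cohomology, the standard de Rham isomorphisms $H^{\bullet}(\mathcal{SM})\cong H^{\bullet}(M_{\mathrm{red}})$ on both the superform and the integral-form side, forcing $f_{*}$ to be an isomorphism. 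Independence from the choice of representative is then a direct computation: if $\mathbb{Y}'=\mathbb{Y}+d_{dR}\eta$ with $\eta\in\Sigma^{(-1|n)}$, the map
\[
 h^{(k)}(\omega):=(-1)^{k}\,\omega\wedge\eta
\]
satisfies $\delta h^{(k)}(\omega)+h^{(k+1)}(d_{dR}\omega)=\omega\wedge d_{dR}\eta$ by Leibniz, which is exactly the difference $f_{\mathbb{Y}'}^{(k)}(\omega)-f_{\mathbb{Y}}^{(k)}(\omega)$.

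The main obstacle, in my view, is the quasi-isomorphism claim, because it requires transporting cohomological information between two superficially very different realisations of the same complex. The cleanest conceptual route is the factorisation through the reduced submanifold sketched above, but it still needs a careful bookkeeping of the formal $\delta$-calculus to be made rigorous. A more pedestrian alternative is to build, in local coordinates, an explicit homotopy inverse to $f^{(\bullet)}$ using Berezin integration over the odd fibre directions, and then to verify by direct manipulation that the two composites differ from the identity by $\delta$- and $d_{dR}$-exact homotopies; this local construction can subsequently be globalised via a partition of unity on $M_{\mathrm{red}}$.
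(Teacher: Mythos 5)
Your overall strategy coincides with the paper's: single out the spacetime representative $\theta^{1}\cdots\theta^{n}\,\delta(d\theta^{1})\cdots\delta(d\theta^{n})$, realise $f^{(\bullet)}$ as multiplication by it (which automatically projects onto the reduced part, since the deltas kill the $d\theta$'s and the $\theta$-prefactor kills the nilpotents), obtain the cochain condition from closedness, deduce the quasi-isomorphism from the isomorphisms $H^{p}(\mathcal{SM},d_{dR})\cong H^{p}(\mathcal{SM}_{red},d_{dR})\cong H^{p}(\mathcal{SM},\delta)$ through the embedding of the reduced manifold, and produce the chain homotopy from the Leibniz rule applied to $\omega\wedge\eta$ --- your sign computation for $h^{(k)}$ is correct and is literally the identity \eqref{NOPCOV} used in the paper. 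Your ``pedestrian alternative'' of constructing an explicit quasi-inverse is also carried out in the paper (the operator $Z$ of \eqref{TQIOC}--\eqref{TQIOD}), where it is noted that this gives an independent proof of the isomorphism of the two cohomologies.

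There are, however, two concrete omissions. First, you never verify that the spacetime representative --- and hence the map $\omega\mapsto\omega\wedge\mathbb{Y}^{(0|n)}$ --- is globally defined: the expression $\theta^{1}\cdots\theta^{n}\,\delta(d\theta^{1})\cdots\delta(d\theta^{n})$ is written in a chart, and under a transition function $\theta^{\alpha}\mapsto\chi^{\alpha}_{\beta}(x)\theta^{\beta}+O(\theta^{3})$ one must check that the $\det(\chi)$ coming from $\prod_{\alpha}\theta^{\alpha}$ cancels the $\det(\chi)^{-1}$ coming from the deltas (with no absolute value), and that all higher-order terms in $\theta$ are annihilated by the prefactor. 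This is not cosmetic: it is exactly what makes the construction valid on non-split supermanifolds, where no global projection $\mathcal{SM}\to\mathcal{SM}_{red}$ exists, so your interpretation of $f^{(\bullet)}$ as ``push-forward along $\iota$'' cannot be invoked globally without it. Second, the proposition defines the target as $\Sigma^{\bullet}=\mathpzc{B}er\left(\mathcal{SM}\right)\otimes_{\mathcal{O}_{\mathcal{SM}}}S^{m-\bullet}\Pi\mathcal{T}\mathcal{SM}$ with its own differential $\delta$ given by \eqref{NOPCOLA}; your one-line Leibniz argument for the cochain condition presupposes that $\delta$ coincides with $d_{dR}$ in the distributional realisation, which is precisely the dictionary the statement asks you to establish. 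A complete proof must exhibit $f^{(p)}(\omega)$ as a section of $\mathpzc{B}er\otimes S^{m-p}\Pi\mathcal{T}$ (as in \eqref{NOPCOJ}--\eqref{NOPCOK}) and check $f^{(p+1)}\circ d=\delta\circ f^{(p)}$ against that $\delta$, rather than working purely inside the formal delta calculus.
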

\begin{proof}
	We will try to structure the proof constructively and by steps, in order to have an exposition that interpolates between the mathematical and the physical languages.
	
	Let us assume that $\mathcal{SM}$ is connected (i.e., $\mathcal{SM}_{red}$ is connected), consider an open set $\mathcal{U}$ locally described by the set of coordinates $\left\lbrace x^i , \theta^\alpha \right\rbrace , i=1 , \ldots , m , \alpha = 1 , \ldots , n$; let $\mathbb{Y}^{(0|m)}_{s.t.}$, i.e., what in Physics literature is known as the \virgolette spacetime PCO'', be defined as
	\begin{equation}\label{NOPCOA}
		\mathbb{Y}_{s.t.}^{(0|n)} = \left( \prod_{\alpha = 1}^n \theta^\alpha \right) \left( \bigwedge_{\beta = 1}^n \delta \left( d \theta^\beta \right) \right) = \theta^1 \ldots \theta^n \delta \left( d \theta^1 \right) \wedge \ldots \wedge \delta \left( d \theta^n \right) \ .
	\end{equation}
	This $(0|n)$-form defines a cohomology class: the closure is easily verified thanks to the formal property $d \theta^\alpha \delta \left( d \theta^\alpha \right) = 0 , \forall \alpha = 1 , \ldots , n$; the non-exactness can be easily verified by checking that it does not exist an odd $(-1|n)$-form whose differential gives \eqref{NOPCOA}. Intuitively, since the de Rham differential can not produce $\theta$s, one has to start ab initio with all of them. The non-exactness can also be inferred by integration: one could integrate \eqref{NOPCOA} on a \virgolette fat point'', i.e., a point equipped with the given Grassmann algebra, and immediately see that the integral is non-zero.
	
	We can show that \eqref{NOPCOA} is globally defined: when patching from an open set to another, consider the general transformation
	\begin{eqnarray}
		\nonumber x^i &\mapsto& x'^i = f^i \left( x , \theta \right) = f^i_0 \left( x \right) + O \left( \theta^2 \right) \ , \\
		\label{NOPCOB} \theta^\alpha &\mapsto& \theta'^\alpha = \chi^\alpha \left( x , \theta \right) = \sum_{\beta = 1}^n \chi^{\alpha}_{\beta} \left( x \right) \theta^\beta + O \left( \theta^3 \right) \ ,
	\end{eqnarray}
	where $f^i$ and $\chi^\alpha$ are even and odd, respectively (we are omitting the explicit expressions of some terms in the general coordinate transformation as they are inessential for our purposes, because of the presence of $\left( \prod_{\alpha = 1}^n \theta^\alpha \right)$ in \eqref{NOPCOA}). The corresponding transformation for 1-forms reads
	\begin{eqnarray}
		\nonumber d x^i &\mapsto& d x'^i = \sum_{j=1}^m \left( \partial_j f^i_0 \left( x \right) \right) d x^j + O \left( \theta \right) \\
		\label{NOPCOC} d \theta^\alpha &\mapsto& d \theta'^\alpha = \sum_{\beta = 1}^n \left[ \left( d \chi^{\alpha}_{\beta} \left( x \right) \right) \theta^\beta + \chi^\alpha_\beta \left( x \right) d \theta^\beta \right] + O \left( \theta^2 \right) \ .
	\end{eqnarray}
	The transformed PCO reads
	\begin{equation}\label{NOPCOD}
		\mathbb{Y}_{s.t.}^{(0|n)} \mapsto \mathbb{Y}_{s.t.}^{\prime (0|n)} = \left( \prod_{\alpha = 1}^n \theta'^\alpha \right) \left( \bigwedge_{\beta = 1}^n \delta \left( d \theta'^\beta \right) \right) = $$ $$ = \left( \prod_{\alpha = 1}^n \left[ \sum_{\beta = 1}^n \chi^{\alpha}_{\beta} \left( x \right) \theta^\beta + O \left( \theta^3 \right) \right] \right) \left( \bigwedge_{\gamma = 1}^n \delta \left( \sum_{\delta = 1}^n \left[ \left( d \chi^{\gamma}_{\delta} \left( x \right) \right) \theta^\delta + \chi^\gamma_\delta \left( x \right) d \theta^\delta \right] + O \left( \theta^2 \right) \right) \right) = $$ $$ = \left( \prod_{\alpha = 1}^n \sum_{\beta = 1}^n \chi^{\alpha}_{\beta} \left( x \right) \theta^\beta \right) \left( \bigwedge_{\gamma = 1}^n \delta \left( \sum_{\delta = 1}^n \left[ \chi^\gamma_\delta \left( x \right) d \theta^\delta \right] \right) \right) = \left( \det \left( \chi^\alpha_\beta \right) \prod_{\alpha = 1}^n \theta^\alpha \right) \left( \frac{1}{\det \left( \chi^\alpha_\beta \right)} \bigwedge_{\beta = 1}^n \delta \left( d \theta^\beta \right) \right) = $$ $$ = \mathbb{Y}_{s.t.}^{(0|n)} \ , 
	\end{equation}
	where we have used the antisymmetry of $\theta$s and the formal properties of $\delta \left( d \theta \right)$s. In particular, the appearance of the term $\left( \det \left( \chi^\alpha_\beta \right) \right)^{-1}$ follows from the property $\delta \left( \lambda d \theta \right) = \frac{1}{\lambda} \delta \left( d \theta \right)$, without the absolute value (see \cite{Witten}). Notice that one can drop the connectedness assumption on $\mathcal{SM}$: in this case, one can define one PCO for each connected component of $\mathcal{SM}$. This simply reflects the isomorphism $H^{(0|0)} \left( \mathcal{SM} \right) \cong H^{(0|n)} \left( \mathcal{SM} \right)$, where $\dim H^{(0|0)} \left( \mathcal{SM} \right)$ is the number of connected components of $\mathcal{SM}$.
	
	We now construct the cochain map corresponding to the spacetime PCO. First of all, we have $f^{(i)} \defeq 0 , \forall i < 0 , i > m$. Given an open set $\mathcal{U}$ locally described by the set of coordinates $\left\lbrace x^i , \theta^\alpha \right\rbrace , i=1 , \ldots , m , \alpha = 1 , \ldots , n$, consider the local set of generators of $\mathcal{T}_{\mathcal{U}} \mathcal{SM}$ to be given by $\left\lbrace \partial_{x^i} , \partial_{\theta^\alpha} \right\rbrace , i = 1 , \ldots , m , \alpha = 1 , \ldots , n$ and the (parity-changed) dual basis $\left\lbrace dx^i , d \theta^\alpha \right\rbrace , i = 1 , \ldots , m , \alpha = 1 , \ldots , n$ spanning $\Pi \mathcal{T}_{\mathcal{U}}^* \mathcal{SM}$. A section of the Berezinian is generated by (see \cite{Manin, Noja})
	\begin{equation}\label{NOPCOE}
		\mathcal{D} \defeq \cdot \left[ dx^1 \wedge \ldots \wedge dx^m \otimes \partial_{\theta^1} \wedge \ldots \wedge \partial_{\theta^n} \right] \ .
	\end{equation}
	The first (non-zero) map $f^{(0)}$ is defined as
	\begin{eqnarray}
		\nonumber f^{(0)}: \mathcal{O}_{\mathcal{U}} &\to& \Sigma^{(0)} \left( \mathcal{U} \right) \\
		\label{NOPCOF} g \left( x , \theta \right) &\mapsto& f^{(0)} \left( g \left( x , \theta \right) \right) \defeq g \left( x , \theta \right) \prod_{\alpha = 1}^n \theta^\alpha \mathcal{D} \otimes \bigwedge_{i=1}^m \pi \partial_{x^i} = g_0 \left( x \right) \prod_{\alpha = 1}^n \theta^\alpha \mathcal{D} \otimes \bigwedge_{i=1}^m \pi \partial_{x^i} \ ,
	\end{eqnarray}
	where $g_0 \left( x \right) \defeq \left[ g \left( x , \theta \right) \right]_{\theta = 0}$ is the function obtained from $g$ by modding out the ideal generated by nilpotent coordinates and where $\pi$ denotes a parity change. This map is built as inspired by the short exact sequence defining the embedding $i: \mathcal{SM}_{red} \hookrightarrow \mathcal{SM}$, which at the level of sheaves reads
	\begin{equation}\label{NOPCOG}
		\begin{tikzcd}
			0\arrow{r} & \mathcal{J} \arrow{r} & \mathcal{O}_{\mathcal{SM}} \arrow{r}{i^*} & \mathcal{O}_{\mathcal{SM}_{red}} \arrow{r} & 0 \ \ .
		\end{tikzcd}
	\end{equation}
	Locally (in any open set $\mathcal{U}$), it implies $\mathcal{O}_{\mathcal{U}} = \mathcal{O}_{\mathcal{U}_{red}} \oplus \mathcal{J}$, where $\mathcal{J}$ is the ideal generated by the nilpotents and $\mathcal{O}_{\mathcal{U}_{red}}$ is obtained from $\mathcal{O}_{\mathcal{SM}_{red}}$, i.e., the reduced sheaf $\mathcal{O}_{\mathcal{SM}} / \mathcal{J}$. More specifically, for each open set $\mathcal{U}$, the map $f^{(0)}$ takes $\mathcal{O}_{\mathcal{U}}$ to $\mathcal{O}_{\mathcal{U}_{red}} \cdot \prod_{\alpha = 1}^n \theta^\alpha \mathcal{D} \otimes \bigwedge_{i=1}^m \pi \partial_{x^i}$. We can extend this to the whole $\mathcal{O}_{\mathcal{SM}}$ by patching and by noting that the section $\prod_{\alpha = 1}^n \theta^\alpha \mathcal{D} \otimes \bigwedge_{i=1}^m \pi \partial_{x^i}$ is global. Really, one has that under a general coordinate transformation \eqref{NOPCOB}, this section becomes
	\begin{equation}\label{NOPCOH}
		\prod_{\alpha = 1}^n \theta'^\alpha \mathcal{D'} \otimes \bigwedge_{i=1}^m \pi \partial_{x'^i} = \det \left( \chi^\alpha_\beta \right) \prod_{\alpha = 1}^n \theta^\alpha Ber \left( J \right) \mathcal{D} \otimes \left[ \det \left( \partial_j f^i_0 \right) \right]^{-1} \bigwedge_{i=1}^m \pi \partial_{x^i} = $$ $$ = \det \left( \chi^\alpha_\beta \right) \prod_{\alpha = 1}^n \theta^\alpha \frac{\det \left( \partial_j f^i_0 \right)}{\det \left( \chi^\alpha_\beta \right)} \mathcal{D} \otimes \left[ \det \left( \partial_j f^i_0 \right) \right]^{-1} \bigwedge_{i=1}^m \pi \partial_{x^i} = \prod_{\alpha = 1}^n \theta^\alpha \mathcal{D} \otimes \bigwedge_{i=1}^m \pi \partial_{x^i} \ ,
	\end{equation}
	where $J$ is the Jacobian of the transformation \eqref{NOPCOB} and where all the terms containing $\theta$s in the determinants have been dropped because of the presence of $\prod_{\alpha = 1}^n \theta^\alpha$.
	\begin{remark}
		The explicit form \eqref{NOPCOF} is related to \eqref{NOPCOA} in the following way: in the distributional realisation, \eqref{NOPCOE} can be written as $\displaystyle \mathcal{D} \defeq dx^1 \wedge \ldots \wedge dx^m \wedge \delta \left( d \theta^1 \right) \wedge \ldots \wedge \delta \left( d \theta^n \right)$; one can then obtain \eqref{NOPCOA} by acting with the contractions $\iota_{\partial_{x_1}} \ldots \iota_{\partial_{x_m}}$ (identified with $\displaystyle \bigwedge_{i=1}^m \pi \partial_{x^i}$) and multiplying by $\displaystyle \prod_{\alpha = 1}^n \theta^\alpha$.
	\end{remark}
	\begin{remark}
		If one works in the smooth category, the sequence \eqref{NOPCOG} splits non-canonically (smooth supermanifolds are split, see \cite{Batchelor}), i.e., one can choose a projection $p: \mathcal{SM} \to \mathcal{SM}_{red}$ or, on maps, $p^* : \mathcal{O}_{\mathcal{SM}_{red}} \to \mathcal{O}_{\mathcal{SM}}$ such that $\mathcal{O}_{\mathcal{SM}} = \mathcal{O}_{\mathcal{SM}_b} \oplus \mathcal{J}$ globally. At the level of transition functions, this means that one can choose \eqref{NOPCOB} such that the terms at least quadratic in $\theta$s are set to zero. Here, we want to emphasise that this assumption is not needed to show that the given cochain map, or, analogously, the spacetime PCO, is globally defined in the non-split cases as well.
	\end{remark}
	
	 The other maps are defined analogously; in particular, given the forms $\Omega^i \left( \mathcal{SM} \right)$, \eqref{NOPCOG} implies
	 \begin{equation}\label{NOPCOI}
	 	\begin{tikzcd}
			0\arrow{r} & Ker^{(p|0)} \left( i^* \right) \arrow{r} & \Omega^{(p|0)} \left( \mathcal{SM} \right) \arrow{r}{i^*} & \Omega^{(p)} \left( \mathcal{SM}_{red} \right) \arrow{r} & 0 \ \ .
		\end{tikzcd}
	 \end{equation}
	 Locally, this means that on an open set $\mathcal{U}$, we have $\Omega^{(p|0)} \left( \mathcal{U} \right) = \Omega^{(p)} \left( \mathcal{U}_{red} \right) \oplus Ker^{(p|0)} \left( i^* \right)$, where $\Omega^{(p)} \left( \mathcal{U}_{red} \right)$ represents forms of the reduced manifold (in particular, written in terms of $x$ and $dx$ only) while $Ker^{(p|0)} \left( i^* \right)$ represents forms at least linear in $\theta$ or $d \theta$. The map $f^{(p)}$ is then defined as
	 \begin{eqnarray}
		\nonumber f^{(p)}: \Omega^{(p|0)} \left( \mathcal{U} \right) &\to& \Sigma^{(p)} \left( \mathcal{U} \right) \\
		\label{NOPCOJ} \sum_{i_1 \ldots i_p = 1}^m g_{i_1 \ldots i_p} \left( x \right) dx^{i_1} \wedge \ldots \wedge dx^{i_p} + \eta &\mapsto& \sum_{i_1 \ldots i_p = 1}^m g_{i_1 \ldots i_p} \left( x \right) dx^{i_1} \wedge \ldots \wedge dx^{i_p} \cdot \prod_{\alpha = 1}^n \theta^\alpha \mathcal{D} \otimes \bigwedge_{i=1}^m \pi \partial_{x^i} \ ,
	\end{eqnarray}
	where $\eta \in Ker^{(p|0)} \left( i^* \right)$ and the \virgolette $\cdot$ '' notation means that we still have to use the pairing $d x^i \left( \pi \partial_{x_j} \right) = \delta^i_j$ to simplify \eqref{NOPCOF} as
	\begin{equation}\label{NOPCOK}
		\sum_{i_1 \ldots i_p = 1}^m g_{i_1 \ldots i_p} \left( x \right) dx^{i_1} \wedge \ldots \wedge dx^{i_p} \prod_{\alpha = 1}^n \theta^\alpha \mathcal{D} \otimes \bigwedge_{i=1}^m \pi \partial_{x^i} = $$ $$ = \frac{\left( -1 \right)^{pm}}{\left( m - p \right)!} \sum_{i_1 \ldots i_p = 1}^m \sum_{j_1 \ldots j_m = 1}^m  g_{i_1 \ldots i_p} \left( x \right) \delta^{i_1}_{j_1} \ldots \delta^{i_p}_{j_p} \prod_{\alpha = 1}^n \theta^\alpha \mathcal{D} \otimes \epsilon^{j_1 \ldots j_p j_{p+1} j_m} \bigwedge_{s=p+1}^m \pi \partial_{x^{j_s}} = $$ $$ = \frac{\left( -1 \right)^{pm}}{\left( m - p \right)!} \sum_{j_1 \ldots j_m = 1}^m g_{j_1 \ldots j_p} \left( x \right) \prod_{\alpha = 1}^n \theta^\alpha \mathcal{D} \otimes \epsilon^{j_1 \ldots j_p j_{p+1} j_m} \bigwedge_{s=p+1}^m \pi \partial_{x^{j_s}} \ ,
	\end{equation}
	where we have written $\displaystyle \bigwedge_{i=1}^m \pi \partial_{x^i} = \frac{1}{m!} \sum_{j_1 \ldots j_m = 1}^m \epsilon^{j_1 \ldots j_m} \pi \partial_{x_{j_1}} \wedge \ldots \wedge \pi \partial_{x_{j_m}}$. As in the previous cases, the global extension can be done by glueing different patches and recalling that $\prod_{\alpha = 1}^n \theta^\alpha \mathcal{D} \otimes \bigwedge_{i=1}^n \pi \partial_{x^i}$ is globally defined as shown in (4).
	
	To verify that \eqref{NOPCOJ} defines a cochain map, we have to check that
	\begin{equation}\label{NOPCOL}
		f^{(p+1)} \circ d = \delta \circ f^{(p)} \ , \ \forall p \in \mathbb{Z} \ ,
	\end{equation}
	where $\delta$ is the nilpotent operator of the integral form complex $\left( \Sigma^\bullet \left( \mathcal{SM} \right) , \delta \right)$ and is defined as (see, e.g., \cite{Manin, Noja})
	\begin{equation}\label{NOPCOLA}
		\delta \defeq \sum_{i=1}^m \mathcal{L}_{\partial_{x^i}} \otimes \frac{\partial}{\partial \left( \pi \partial_{x^i} \right)} - \sum_{\alpha=1}^n \mathcal{L}_{\partial_{\theta^\alpha}} \otimes \frac{\partial}{\partial \left( \pi \partial_{\theta^\alpha} \right)} \ .
	\end{equation}
	We have that \eqref{NOPCOL} holds trivially for $p < 0$ and $p > m$, as in these cases the maps $f^{(p)}$ are identically 0. For $0 \leq p \leq m$, consider $\omega \in \Omega^p \left( U \right)$, locally written as in \eqref{NOPCOJ}. We have
	\begin{equation}\label{NOPCOM}
		d \omega = \sum_{i_1 \ldots i_p = 1}^m \sum_{j=1}^m \partial_j g_{i_1 \ldots i_p} \left( x \right) dx^j \wedge dx^{i_1} \wedge \ldots \wedge dx^{i_p} + d \eta \ .
	\end{equation}
	Notice that $\left[ d Ker^{(p|0)} \left( i^* \right) \right] \in Ker^{(p+1|0)} \left( i^* \right)$. This leads to
	\begin{equation}\label{NOPCON}
		f^{(p+1)} \circ d \omega = \frac{\left( -1 \right)^{(p+1)m}}{\left( m - p - 1 \right)!} \sum_{j_1 \ldots j_m = 1}^m \partial_{j_1} g_{j_2 \ldots j_{p+1}} \left( x \right) \prod_{\alpha = 1}^n \theta^\alpha \mathcal{D} \otimes \epsilon^{j_1 \ldots j_{p+1} j_{p+2} j_n} \bigwedge_{s=p+2}^n \pi \partial_{x^{j_s}} \ .
	\end{equation}
	On the other hand, we have
	\begin{eqnarray}
		\nonumber \delta \circ f^{(p)} \omega &=& \delta \left[ \frac{\left( -1 \right)^{pm}}{\left( m - p \right)!} \sum_{j_1 \ldots j_m = 1}^m g_{j_1 \ldots j_p} \left( x \right) \prod_{\alpha = 1}^n \theta^\alpha \mathcal{D} \otimes \epsilon^{j_1 \ldots j_p j_{p+1} j_n} \bigwedge_{s=p+1}^n \pi \partial_{x^{j_s}} \right] = \\
		\label{NOPCOO} &=& \frac{\left( -1 \right)^{pm}}{\left( m - p \right)!} \left( -1 \right)^m (m-p) \sum_{j_1 \ldots j_m = 1}^m \partial_j g_{j_1 \ldots j_p} \left( x \right) \prod_{\alpha = 1}^n \theta^\alpha \mathcal{D} \otimes \epsilon^{j_1 \ldots j_p j_{p+1} j_n} \delta^j_{j_{p+1}} \bigwedge_{s=p+2}^n \pi \partial_{x^{j_s}} \ .
	\end{eqnarray}
	After renaming the indices, we see that \eqref{NOPCON} and \eqref{NOPCOO} coincide.
	
	Let us now verify that these maps define an isomorphism in cohomology: from the very definition of cochain map, we have that cocycles and coboundaries are respected. In particular, $f^{(\bullet)}$ induce linear morphisms for the cohomology groups
	\begin{equation}\label{NOPCOP}
		H (f)^{(p)}: H^p_{d_{dR}} \to H^p_\delta \ ,
	\end{equation}
	where with $H^p_{d_{dR}}$ and $H^p_\delta$ we denote the $p$-th cohomology groups of the superform and integral form complexes, respectively. In particular, we use the well-known isomorphism between the de Rham cohomology of a supermanifold and its reduced manifold:
	\begin{equation}\label{NOPCOQ}
		H^p \left( \mathcal{SM} , d_{dR} \right) \cong H^p \left( \mathcal{SM}_{red} , d_{dR} \right) \ , \ \forall 0 \leq p \leq m \ .
	\end{equation}
	This means that fixed $p$, there are no cohomology classes which can be represented as forms in $Ker^{(p|0)} \left( i^* \right)$ since they would be projectable to zero by reduction on the base manifold. On the other hand, $Ker^{(p|0)} \left( i^* \right)$ is by construction the kernel of $f^{(p)}$. These two facts imply that
	\begin{equation}\label{NOPCOR}
		\dim H^p \left( \mathcal{SM} , d_{dR} \right) = \dim H (f)^{(p)} \left[ H^p \left( \mathcal{SM} , d_{dR} \right) \right] \leq \dim H^p \left( \mathcal{SM} , \delta \right) \ .
	\end{equation}
	By recalling that $H^p \left( \mathcal{SM} , d_{dR} \right) \cong H^p \left( \mathcal{SM} , \delta \right) \forall p$, it follows by linearity that $H (f)^{(p)}$, i.e., the induced maps in cohomology, are isomorphisms $\forall p \in \mathbb{Z}$.
	
	Let us now move to the final part of the proposition: adding $d$-exact terms to $\mathbb{Y}^{(0|n)}$ corresponds to chain homotopies for $f^{(\bullet)}$. On the \virgolette physical notation'' side, one considers instead of the spacetime PCO \eqref{NOPCOA}, a different representative of $H^{(0|n)} \left( \mathcal{SM} , d_{dR} \right)$, by adding $d$-exact pieces:
	\begin{equation}\label{NOPCOS}
		\mathbb{Y}^{(0|n)}_{new} = \mathbb{Y}_{s.t.}^{(0|n)} + d \Lambda^{(-1|n)} \ .
	\end{equation}		
	This is very useful for many reasons; in particular, one can add exact terms such that the resulting PCO manifestly preserves some (super)symmetries. We will give a very simple example at the proof's end. In terms of cochain maps, two cochain maps $f^{(\bullet)} , e^{(\bullet)} : \left( \Omega^\bullet \left( \mathcal{SM} \right) , d_{dR} \right) \to \left( \Sigma^\bullet \left( \mathcal{SM} \right) , \delta \right)$ are chain homotopic if there exist a sequence of maps $h^{(\bullet)} : \left( \Omega^\bullet \left( \mathcal{SM} \right) , d_{dR} \right) \to \left( \Sigma^{\bullet -1} \left( \mathcal{SM} \right) , \delta \right)$ such that
	\begin{equation}\label{NOPCOT}
		e^{(p)} - f^{(p)} = \delta \circ h^{(p)} + h^{(p+1)} \circ d \ .
	\end{equation}
	In our case, we have that the maps $f^{(p)}$ read as in \eqref{NOPCOJ}, while the maps $e^{(p)}$ are defined as follows: from \eqref{NOPCOS}, we extract the expression for the integral form (in the physical notation) $\Lambda^{(-1|n)}$. By following the \virgolette dictionary'' between the physical and mathematical notations, we consider the corresponding integral form denoted as $h^{(-1)} \in \Sigma^{(-1)} \left( \mathcal{SM} \right)$. We can now consider the differential $\delta$ acting on $h^{(-1)}$, so that $\delta h^{(-1)} \in \Sigma^{(0)} \left( \mathcal{SM} \right)$. This object defines a sequence of maps
	\begin{eqnarray}
		\nonumber \left( \delta h \right)^{(p)} : \Omega^{(p)} \left( \mathcal{SM} \right) &\to& \Sigma^{(p)} \left( \mathcal{SM} \right) , \forall p \in \mathbb{Z} \\
		\label{NOPCOU} \omega &\mapsto& \left( \delta h \right)^{(p)} \left( \omega \right) \defeq \delta h^{(-1)} \circ \omega = \omega \cdot \delta h^{(-1)} \ ,
	\end{eqnarray}
	where, again, \virgolette $\cdot$'' stands for the pairing of the generators in $\omega$ and those in $\delta h^{(-1)}$. It is not difficult to show that $\left( \delta h \right)^{(p)}$ satisfy \eqref{NOPCOL} $\forall p \in \mathbb{Z}$, i.e., it defines a cochain map. The maps $e^{(p)}$ are then defined as $e^{(p)} \defeq f^{(p)} + \left( \delta h \right)^{(p)} , \forall p \in \mathbb{Z}$. The proof is easily concluded thanks to the fact that
	\begin{equation}\label{NOPCOV}
		\delta \left[ h^{(p)} \left( \omega \right) \right] = \left( \delta h^{(p)} \right) \left( \omega \right) + (-1)^{|h^{(p)}|} h^{(p+1)} \left( d \omega \right) \ ,
	\end{equation}
	as one can directly verify by using the definition of the two operators $\delta$ and $d$. Notice that $\displaystyle |h^{(p)}| = 1$ as follows from $\displaystyle |\delta h^{(p)}| = |f^{(p)}| = 0$ and from $\left| \delta \right| = 1$.

\end{proof}

\subsection{The quasi-inverse operator}

In the previous proposition, we have shown that the cochain map corresponding to a PCO is a quasi-isomorphism between the complexes of superforms and integral forms. This means that we should be able to construct a cochain map from the complex of integral forms to the complex of superforms $Z^{(i)}: \Sigma^{(i)} \left( \mathcal{SM} \right) \to \Omega^{(i)} \left( \mathcal{SM} \right)$ which when restricted to cohomology, is the inverse of the cochain map corresponding to a PCO. This is well known in Physics\footnote{To be honest, to our knowledge, the operator we refer to as \virgolette inverse operator'' was introduced in Physics before the other.} and an explicit form of this operator, named $Z$, is given in terms of formal Heaviside step-function of the contraction along basis odd vector fields:
\begin{equation}\label{TQIOA}
	Z = \circ_{\alpha = 1}^n Z_\alpha \ , \ Z_\alpha \defeq -i \left[ d , \Theta \left( \iota_{\partial_{\theta^\alpha}} \right) \right] \ ,
\end{equation}
where the action of the Heaviside step-function is understood through its Fourier representation (see \cite{Belopolsky, Witten, Crem-Gras} for more details and examples). This operator and \eqref{NOPCOA} satisfy the following quasi-inverse identities (the first on integral forms, the second on superforms):
\begin{equation}\label{TQIOB}
	Z \mathbb{Y} Z = Z \ , \ \mathbb{Y} Z \mathbb{Y} = \mathbb{Y} \ .
\end{equation}
The direct translation from the compact expression \eqref{TQIOA} to a cochain map is not immediate, but it is not difficult to infer it once one has seen how \eqref{TQIOA} acts in the distributional realisation (see, e.g., \cite{Crem-Gras}). In particular, we have that $Z^{(i)} \defeq 0 , \forall i > m, i<0$. For $i=m$, on a given open set $\mathcal{U}$, one has
\begin{eqnarray}
	\nonumber Z^{(m)}: \Sigma^{(m)} \left( \mathcal{U} \right) &\to& \Omega^{(m)} \left( \mathcal{U} \right) \\
	\label{TQIOC} g \left( x , \theta \right) \mathcal{D} &\mapsto& Z^{(m)} \left( g \left( x , \theta \right) \mathcal{D} \right) \defeq g_{top} ( x ) dx^1 \wedge \ldots \wedge dx^m \ , 
\end{eqnarray}
where $g_{top} \left( x \right)$ is the component of $g (x , \theta)$ with all the odd variables: $g \left( x , \theta \right) = g_0 \left( x \right) + \ldots + g_{top} \left( x \right) \theta^1 \ldots \theta^n$. On a general integral form in $\Sigma^{(m-p)} \left( \mathcal{U} \right)$, we have $\forall 0 \leq p \leq m$
\begin{eqnarray}
	\nonumber Z^{(m-p)}: \Sigma^{(m-p)} \left( \mathcal{U} \right) &\to& \Omega^{(m-p)} \left( \mathcal{U} \right) \\
	\label{TQIOD} \sum_{i_1 , \ldots , i_{p} = 1}^m g_{i_1 \ldots i_{p}} \left( x , \theta \right) \mathcal{D} \otimes \bigwedge_{s = 1}^p \pi \partial_{x^s} + \eta &\mapsto& \sum_{i_1 , \ldots , i_{p} = 1}^m g_{i_1 \ldots i_{p}, top} \left( x \right) \iota_{\partial_{x^{i_1}}} \ldots \iota_{\partial_{x^{i_p}}} \left( dx^1 \wedge \ldots \wedge dx^m \right) \ ,
\end{eqnarray}
where $\eta$ is the part of the integral form containing at least one (parity changed) odd vector field $\pi \partial_{\theta}$. Equation \eqref{TQIOD} could be further simplified by noting that
\begin{equation*}
	\iota_{\partial_{x^{i_1}}} \ldots \iota_{\partial_{x^{i_p}}} \left( dx^1 \wedge \ldots \wedge dx^m \right) = \sum_{j_1 , \ldots , j_m = 1}^m \iota_{\partial_{x^{i_1}}} \ldots \iota_{\partial_{x^{i_p}}} \frac{1}{n!} \epsilon_{j_1 \ldots j_m} dx^{j_1} \wedge \ldots \wedge dx^{j_m} = $$ $$ = \frac{1}{\left( m - p \right)!} \delta^{j_1}_{i_1} \ldots \delta ^{j_p}_{i_p} \epsilon_{j_1 \ldots j_m}  dx^{j_{p+1}} \wedge \ldots \wedge dx^{j_m} \ .
\end{equation*}
It is very easy to see from the definitions \eqref{NOPCOJ} and \eqref{TQIOC} that these maps satisfy
\begin{equation}\label{TQIOE}
	f^{(p)} \circ Z^{(p)} \circ f^{(p)} = f^{(p)} \ , \ Z^{(p)} \circ f^{(p)} \circ Z^{(p)} = Z^{(p)} \ , \ \forall p \in \mathbb{Z} \ ,
\end{equation}
when acting on superforms and integral forms, respectively. Analogously, one can immediately see that the cochain map \eqref{TQIOC} is the inverse of the cochain map \eqref{NOPCOJ} when restricted to cohomology.

\begin{remark}
	When proving that \eqref{NOPCOJ} defines a quasi-isomorphism between the superform and integral form cochain complexes, we explicitly used the well-known result $H^p \left( \mathcal{SM} , d_{dR} \right) \cong H^p \left( \mathcal{SM} , \delta \right) \forall p$. Here, we want to emphasise that we could have proven that \eqref{NOPCOJ} is a quasi-isomorphism without relying on that result, but by simply constructing its quasi-inverse \eqref{TQIOE}. This could be seen as an alternative, constructive (sketch of the) proof of the isomorphism of the cohomology of the two complexes.
\end{remark}

\subsection{Example: supersymmetric PCO}

We want to present here an easy example where we can introduce cochain maps corresponding to PCOs preserving important symmetries of the physical theory. Let us consider the coset manifold $\mathcal{SM}$ given by $\displaystyle \mathcal{SM} \defeq \frac{ISO \left( 1,2|1 \right)}{SO \left( 1,2 \right)}$, that is, the $d=3 , N=1$ flat superspace. In particular, we are going to construct a PCO which is manifestly supersymmetric.

The coset algebra $\displaystyle \frac{\mathfrak{iso} \left( 1,2|1 \right)}{\mathfrak{so} \left( 1,2 \right)}$ is generated by three even vectors $\left\lbrace P_a \right\rbrace_{a=0}^2$ in the vector representation of $\mathfrak{so}(1,2)$ and two odd vectors $\left\lbrace D_\alpha \right\rbrace_{\alpha=1}^2$ in the spinor representation of $\mathfrak{so}(1,2)$ (often referred to as \virgolette superderivatives''); the only non-trivial commutation relations read\footnote{We denote with $\gamma^a$ a basis of real, symmetric Dirac matrices: $\displaystyle \gamma^0 = - Id , \gamma^1 = \sigma^3 , \gamma^2 = - \sigma^1$, $\sigma^i$ being the Pauli matrices.} (we denote with $[ \cdot , \cdot ]$ the graded commutators)
\begin{equation}\label{ESPCOA}
	\left[ D_\alpha , D_\beta \right] = -2 \gamma^a_{\alpha \beta} P_a \ ,
\end{equation}
where the sum over repeated indices is understood. On the dual side, we can introduce the Maurer-Cartan forms $\left\lbrace V^a \right\rbrace_{a=0}^2$ odd and $\left\lbrace \psi^\alpha \right\rbrace_{\alpha = 1}^2$ even, satisfying the Maurer-Cartan equations
\begin{equation}\label{ESPCOB}
	d V^a = \psi^\alpha \gamma^a_{\alpha \beta} \psi^\beta \ , \ d \psi^\alpha = 0 \ .
\end{equation}
Given a set of local coordinates $\left\lbrace x^a \right\rbrace_{a=0}^3$ even and $\left\lbrace \theta^\alpha \right\rbrace_{\alpha = 1}^2$ odd, the holonomic basis of vector fields is given by $\left\lbrace \partial_a \equiv \frac{\partial}{\partial x^a} \right\rbrace_{a=0}^2$ and $\left\lbrace \partial_\alpha \equiv \frac{\partial}{\partial \theta^\alpha} \right\rbrace_{\alpha = 1}^2$. The anholonomous generators $P_a$ and $D_\alpha$ can be realised as
\begin{equation}\label{ESPCOC}
	P_a = \partial_a \ , \ D_\alpha = \partial_\alpha - \theta^\beta \gamma^a_{\alpha \beta} \partial_a \ .
\end{equation}
The cotangent space is generated by $\left\lbrace d x^a \right\rbrace_{a=0}^2$ and $\left\lbrace d \theta^\alpha \right\rbrace_{\alpha = 1}^2$ and the Maurer-Cartan forms then read
\begin{equation}\label{ESPCOD}
	V^a = dx^a + \theta^\alpha \gamma^a_{\alpha \beta} d \theta^\beta \ , \ \psi^\alpha = d \theta^\alpha \ .
\end{equation}
We can then introduce the supersymmetry generators\footnote{There are different conventions in the literature regarding the definition of supersymmetry generators, we follow \cite{Wess-Bagger}.} as
\begin{equation}\label{ESPCOE}
	Q_\alpha = \partial_\alpha + \theta^\beta \gamma^a_{\alpha \beta} \partial_a \ .
\end{equation}
One can immediately verify the following properties (we denote with $\mathcal{L}_X$ the Lie derivative along the vector field $X$):
\begin{equation}\label{ESPCOF}
	\left[ Q_\alpha , Q_\beta \right] = 2 \gamma^a_{\alpha \beta} P_a \ , \ \left[ Q_\alpha , D_\beta \right] = 0 \ , \ \mathcal{L}_{Q_\alpha} V^a = \left( d \iota_{Q_\alpha} - \iota_{Q_\alpha} d \right) V^a = 0 = \mathcal{L}_{Q_\alpha} \psi^\beta \ ;
\end{equation}
the first property states that two supersymmetries give a translation(i.e., one has a supertraslation algebra); the second property reflects the fact that the superderivarive of a superfield is again a superfield; the third and fourth properties state that the Maurer-Cartan basis of forms is manifestly supersymmetric (i.e., the Lie derivative of the basis generators along the odd vectors $Q_\alpha$ is zero). In particular, we immediately have that any object constructed out of the Maurer-Cartan forms only will be automatically supersymmetric, as follows from Leibniz's rule. Consider the $(0|2)$-integral form (the numeric factor is for later convenience)
\begin{equation}\label{ESPCOG}
	\mathbb{Y}^{(0|2)}_{susy} = - \frac{1}{8} V^a \wedge V^b \wedge \gamma^{c \alpha \beta} \epsilon_{abc} \iota_\alpha \iota_\beta \delta \left( \psi^1 \right) \wedge \delta \left( \psi^2 \right) \ ,
\end{equation}
we claim that it is a Picture Changing Operator, in the sense defined in the previous section. This integral form corresponds (up to a numeric factor) to the cochain map defined as
\begin{eqnarray}
	\nonumber e^{(p)} : \Omega^{(p|0)} \left( \mathcal{SM} \right) &\to& \Sigma^{(p)} \left( \mathcal{SM} \right) \\
	\label{ESPCOGA} \omega &\mapsto& \omega \cdot \mathcal{D} \otimes \sum_{a=0,1,2} \sum_{\alpha , \beta = 1,2} \pi P_a \wedge \pi D_\alpha \wedge \pi D_\beta \gamma^{a \alpha \beta} \ , \ \forall p \in \mathbb{Z} \ ,
\end{eqnarray}
where $\mathcal{D} \defeq \left[ V^1 \wedge V^2 \wedge V^3 \otimes D_1 \wedge D_2 \right]$. The easiest way to prove that \eqref{ESPCOG} is a Picture Changing Operator is to show that it differs from \eqref{NOPCOA} (in the case $n=2$) by an exact term and to do this we simply expand the Maurer-Cartan forms in the holonomic basis as in \eqref{ESPCOD} (we drop the \virgolette $\wedge$'' symbol):
\begin{equation}\label{ESPCOH}
	\mathbb{Y}^{(0|2)}_{susy} = - \frac{1}{8} \left( dx^a + \theta^\alpha \gamma^a_{\alpha \beta} d \theta^\beta \right) \left( dx^b + \theta^\gamma \gamma^b_{\gamma \delta} d \theta^\delta \right) \gamma^{c \epsilon \zeta} \epsilon_{abc} \iota_\epsilon \iota_\zeta \delta \left( d \theta^1 \right) \delta \left( d \theta^2 \right) = $$ $$ = \theta^1 \theta^2 \delta \left( d \theta^1 \right) \delta \left( d \theta^2 \right) + d \left[ \frac{1}{2} dx^0 \theta^1 \theta^2 \iota_2^2 \delta \left( d \theta^1 \right) \delta \left( d \theta^2 \right) +  \frac{1}{2} dx^0 \theta^1 \theta^2 \iota_1^2 \delta \left( d \theta^1 \right) \delta \left( d \theta^2 \right) + \right. $$ $$ \left. - \frac{1}{4} dx^0 dx^1 \theta^1 \iota_1^2 \iota_2 \delta \left( d \theta^1 \right) \delta \left( d \theta^2 \right) - \frac{1}{4} dx^1 dx^2 \theta^2 \iota_1^2 \iota_2 \delta \left( d \theta^1 \right) \delta \left( d \theta^2 \right) - \frac{1}{4} dx^1 dx^2 \theta^1 \iota_1 \iota_2^2 \delta \left( d \theta^1 \right) \delta \left( d \theta^2 \right) + \right. $$ $$ \left. + \frac{1}{4} dx^2 dx^0 \theta^2 \iota_1^2 \iota_2 \delta \left( d \theta^1 \right) \delta \left( d \theta^2 \right) - \frac{1}{4} dx^2 dx^0 \theta^1 \iota_1 \iota_2^2 \delta \left( d \theta^1 \right) \delta \left( d \theta^2 \right) \right] = $$ $$ = \mathbb{Y}^{(0|2)}_{s.t.} + d \Lambda^{(-1|2)} \ .
\end{equation}
From this expression, we can immediately read a chain homotopy between \eqref{ESPCOGA} (up to a scale factor) and \eqref{NOPCOA} in terms of $\Lambda^{(-1|2)}$:
\begin{eqnarray}\label{ESPCOI}
	\nonumber h^{(-1)} \left( \bullet \right) \defeq \bullet \cdot \mathcal{D} & \otimes & \left[ \theta^1 \theta^2 \pi \partial_{x^1} \wedge \pi \partial_{x^2} \otimes \left( \left( \pi \partial_{\theta^2} \right)^{\wedge 2} + \left( \pi \partial_{\theta^1} \right)^{\wedge 2} \right) -\frac{1}{2} \left( \theta^1 \pi \partial_{x^2} \otimes \left( \pi \partial_{\theta^1} \right)^{\wedge 2} \pi \partial_{\theta^2} + \right. \right. \\
	&& + \theta^2 \pi \partial_{x^0} \otimes \left( \left( \pi \partial_{\theta^1} \right)^{\wedge 2} \wedge \pi \partial_{\theta^2} + \pi \partial_{\theta^1} \wedge \left( \pi \partial_{\theta^2} \right)^{\wedge 2} \right) + \\
	\nonumber && \left. - \theta^2 \pi \partial_{x^1} \left( \pi \partial_{\theta^1} \right)^{\wedge 2} \wedge \pi \partial_{\theta^2} + \theta^1 \pi \partial_{x^1} \otimes \pi \partial_{\theta^1} \wedge \left( \pi \partial_{\theta^2} \right)^{\wedge 2} \right) \Big{]} \ .
\end{eqnarray}

The importance of constructing cochain maps that preserve some symmetries is in close contact with Physics. In the following, we comment on two possible fields of application.

\subsubsection*{Rheonomy, or Group Manifold approach}

Supergravity and Superstring (Field) Theory are physical frameworks where supergeometry finds its applications. There are several ways to construct Lagrangians for these theories, one of these is \emph{Rheonomy} (see the books \cite{CDF}). The procedure goes approximately as follows: one defines the Bianchi identities for the field content of the theory (the \virgolette supermultiplets'') living on a certain supermanifold $\mathcal{SM}$ (e.g., a supergroup or a supercoset when in vacuum) and constrain them to satisfy the \virgolette rheonomic constraints'', i.e., some compatibility conditions with the reduction on the base manifold. Out of these data, one can construct a Lagrangian as a form $\mathcal{L} \in \Omega^{(\dim_0 \mathcal{SM} | 0)}$. In order to construct an action, and in particular a consistent variational principle, one should produce a section of the Berezinian of $\mathcal{SM}$ out of $\mathcal{L}$ and this is done via a Picture Changing Operator $\mathbb{Y}^{(0|\dim_1 \mathcal{SM})} \in H^{(0|\dim_1 \mathcal{SM})} \left( \mathcal{SM} \right)$. This datum represents the Poincar\'e dual of the embedding of the reduced submanifold $\mathcal{SM}_{red} \hookrightarrow \mathcal{SM}$ and the choice of representative can tune the resulting action and degrees of freedom. In particular, if $d \mathcal{L} = 0$ (this condition is satisfied when the theory admits a finite number of auxiliary fields) and $\partial \mathcal{SM} = \emptyset$, then all the representatives will lead to the same action:
\begin{equation}
	S \left[ \mathcal{L} , \tilde{\mathbb{Y}}^{(0|\dim_1 \mathcal{SM})} = \mathbb{Y}^{(0|\dim_1 \mathcal{SM})} + d \Lambda^{(-1|\dim_1 \mathcal{SM})} \right] - S \left[ \mathcal{L} , \mathbb{Y}^{(0|\dim_1 \mathcal{SM})} \right] = $$ $$ = \int_{\mathcal{SM}} \mathcal{L} \wedge d \Lambda^{(-1|\dim_1 \mathcal{SM})} = \int_{\mathcal{SM}} (-1)^{\left| \mathcal{L} \right|} d \left( \mathcal{L} \wedge \Lambda^{(-1|\dim_1 \mathcal{SM})} \right) = 0 \ ,
\end{equation}
because of Stokes' theorem. In these cases, the choice of the Picture Changing Operator allows to have the desired manifest (super)symmetries and changing representative intertwines different realisations of the same theory. This shows that the same theory and the same degrees of freedom may be contained in multiple copies in the same superform Lagrangian, allowing the Picture Changing Operator to select different, but equivalent, subsectors of it. A typical example is (super)Chern-Simons theory, the whole analysis can be found in, e.g., \cite{Crem-Gras}.

If $d \mathcal{L} \neq 0$, or in the presence of boundaries, different Picture Changing Operators will give rise to different theories. In these cases, the choice of preserved (super)symmetries will determine the resulting actions' possible (super)symmetries. A recent example of this class of theories can be found, e.g., in \cite{M5brane}.

\subsubsection*{Wilson Operators, or (higher) holonomies}

(Super) Wilson Operators, a.k.a. (higher) holonomies, represent another field where the algebraic methods of Picture Changing Operators can be useful when dealing with invariances and, in particular, oddly-generated symmetries. Let us first recall the definition of these operators: let $\mathcal{SM}$ be a supermanifold and $\Sigma \hookrightarrow \mathcal{SM}$ a submanifold (a line, a surface, etc.), we define the Wilson Operator associated to a (eventually super or higher) connection $A$ and to $\Sigma$ as
\begin{equation}\label{WOOHHA}
	\mathcal{W} \defeq \text{Tr} P \exp \left( \int_\Sigma A^* \right) \ ,
\end{equation}
where $\text{Tr}$ denotes the trace on the gauge group, $P$ is an ordering on the path/surface/etc. and $A^*$ is the pullback of $A$ on $\Sigma$. By making use of the embeddings $\Sigma \hookrightarrow \mathcal{SM}_{red} \hookrightarrow \mathcal{SM}$ one can rewrite the exponent as the integral of the connection $A$ on the whole supermanifold $\mathcal{SM}$:
\begin{equation}\label{WOOHHB}
	\int_\Sigma A^* = \int_{\mathcal{SM}} A \wedge \mathbb{Y}_{\Sigma \hookrightarrow \mathcal{SM}_{red}} \wedge \mathbb{Y}_{\mathcal{SM}_{red} \hookrightarrow \mathcal{SM}} \ .
\end{equation}
The integral form $\mathbb{Y}_{\Sigma \hookrightarrow \mathcal{SM}_{red}} \wedge \mathbb{Y}_{\mathcal{SM}_{red} \hookrightarrow \mathcal{SM}}$ corresponds to the composition of the map defined by the Poincar\'e dual (in the usual sense, i.e., for manifolds) of the embedding $\Sigma \hookrightarrow \mathcal{SM}_{red}$ and the map describing the embedding $\mathcal{SM}_{red} \hookrightarrow \mathcal{SM}$ as described in the previous section. Since both maps are cochain maps, their composition is a cochain map relative to the embedding $\Sigma \hookrightarrow \mathcal{SM}$; as we discussed in the previous example, one can then play with chain homotopies in order to write cochain maps preserving certain symmetries. This is particularly useful as the symmetries of \eqref{WOOHHA} can be studied in terms of the symmetries of these cochain maps, the main advantage is the reduction of a difficult geometrical problem, to an easier algebraic one. This could be useful when studying the classification of \virgolette BPS operators'', i.e., operators that brake part of the supersymmetries, or for other subtle tasks such as the implementation of \virgolette kappa symmetry'' for these operators (this was shown for some specific examples in \cite{WL, WS}).

\subsubsection{A Comment on Symmetric PCOs}

The quest for finding some symmetry-preserving PCOs may be quite difficult, but there are some cases where one can simplify it as it is related to Chevalley-Eilenberg (co)homology. In particular, if one considers a semi-simple Lie algebra $\mathfrak{g}$ over a characteristic-zero field $\mathbb{K}$, then any cohomology class $\omega \in H^p \left( \mathfrak{g} , \mathbb{K} \right)$ contains a $\mathfrak{g}$-invariant cocycle (see the original paper \cite{Chevalley-Eilenberg} for the proof). This can be easily extended to Lie superalgebras, a broad treatment of Lie superalgebras cohomology can be found in the book \cite{Fuks}. Recently, a notion of integral forms and their cohomology in the setting of Lie superalgebras was introduced in \cite{CE} and \cite{CE2}: by mimicking their very definition in the geometric context, algebraic integral forms (with values in a given module $V$) are defined as chains tensored with the \virgolette Haar Berezinian'' $\mathcal{D}$ with values on $V$:
\begin{equation}\label{ACOSPCOA}
	C_\bullet \left( \mathfrak{g} , \mathcal{D} \otimes V \right) \equiv \Sigma^\bullet \left( \mathfrak{g} , V \right) \defeq \mathcal{D} \otimes S^{\dim_0 \mathfrak{g} - \bullet} \Pi \mathfrak{g} \ .
\end{equation}
If one considers $V = \mathbb{K}$ and if $\mathfrak{g}$ is unimodular (see \cite{CE, CE2} for more details), then the module $\mathcal{D}$ is trivial (in the sense that it is $\mathfrak{g}$-invariant) and the computation of the integral form cohomology coincides with the homology. Again, the theorem cited above holds and one can choose a $\mathfrak{g}$-invariant homology class to tensor with the Haar Berezinian; this means that the defined integral form is $\mathfrak{g}$-invariant.

If the Lie superalgebra $\mathfrak{g}$ is the local version of a given Lie supergroup $G$, the previous construction can be used to produce $\mathfrak{g}$-invariant forms for $G$; the algebra is generated by the left-invariant (for example) vector fields, whose duals are the Maurer-Cartan forms, and integral form built from left-invariant vector fields and the Haar Berezinian (which in the unimodular case is both left- and right-invariant) are automatically left-invariant. In the distributional realisation of integral forms, these correspond to distributions built in terms of the Maurer-Cartan forms only and, in particular, with no explicit dependence on the local coordinates. One could then use simple algebraic tools to construct PCOs compatible with the symmetries with the manifold under consideration. In the example above, the PCO in \eqref{ESPCOGA} was built from the homology class $\pi P_a \wedge \pi D_\alpha \wedge \pi D_\beta \gamma^{a \alpha \beta}$ and out of this, one is guaranteed the invariance under the action of the $Q$ generators. This procedure can be used in reverse to see whether a symmetric PCO can be built in terms of Maurer-Cartan forms only and when this is not the case; for example, in the case of $d=11 , N=1$ flat superspace, i.e., $\displaystyle \mathcal{SM} \defeq \frac{ISO \left( 1,10|1 \right)}{SO \left( 1,10 \right)}$, with associated coset Lie superalgebra $\displaystyle \mathfrak{g} \defeq \frac{\mathfrak{iso} \left( 1,10|1 \right)}{\mathfrak{so} \left( 1,10 \right)}$, there is no cohomology class $\omega \in \Sigma^0 \left( \mathfrak{g} \right) \equiv \Omega^{(0|32)} \left( \mathfrak{g} \right)$, hence we know a priori that it is not possible to introduce a PCO built out of Maurer-Cartan forms only.

The importance of this construction manifests widely in Physics and one famous instance is the construction of the Green-Schwarz formulation of superstring theory (see \cite{GSW}). The action is defined as a sigma model action plus a Wess-Zumino term:
\begin{equation}\label{ACOSPCOB}
	S = - \frac{1}{2} \int_{\Sigma} \left( V^\mu \right)^* \wedge \star \left( V^\nu \right)^* \eta_{\mu \nu} + S_2 = - \frac{1}{2} \int_{\Sigma} \sqrt{h} h^{ij} \Pi_i^\mu \Pi_j^\nu \eta_{\mu \nu} + S_2 \ , \ \Pi^\mu_i = \partial_i X^\mu - i \bar{\theta}^A \Gamma^\mu \partial_i \theta^A \ ,
\end{equation}
where $X^\mu : \Sigma \to \mathcal{SM}, \mu = 0 , \ldots , \dim_0 \mathcal{SM} - 1$ are maps from a two-dimensional manifold $\Sigma$ (the \virgolette worldsheet'') to a target supermanifold $\mathcal{SM}$, chosen to be super Minkowski with $N=0,1,2$ supercharges ($A = 0 , \ldots , N$ is an $R$-symmetry index), with $\dim_0 \mathcal{SM} = 3,4,6,10$. The second term of the action (which is needed for the action to be \virgolette kappa symmetric'', in order to have the matching of even-odd degrees of freedom, in other words, to be supersymmetric in the worldsheet) has a quite complicated expression when integrated on $\Sigma$ and it is difficult to check its supersymmetry property (in the target). Conversely, it is very easy to construct it in the way presented above: one can consider the Lie algebra cohomology\footnote{We are abusing the name since we are dealing with a coset (super Poincar\'e mod Lorentz), so we should talk about \emph{relative} Lie algebra cohomology.} of supertranslations $\mathfrak{g}$ in the target and immediately see that in the cases of dimensions under consideration there exist a three form $\Omega^{(3)} \left( \mathfrak{g} \right) \ni \omega = V^\mu \Gamma_\mu \psi^A \psi^A$ so that the Wees-Zumino term can be written as
\begin{equation}\label{ACOSPCOC}
	S_2 = c \int_\Lambda \omega^* \ , \ \partial \Lambda = \Sigma \ ,
\end{equation}
where with $\omega^*$ we denote the pull back of $\omega$ on a manifold $\Lambda$ whose boundary is the worldsheet $\Sigma$ and $c$ is a given coefficient. In this case, not only the supersymmetry of the second term of the Lagrangian is guaranteed by construction, but also the existence of this kind of term in certain dimensions is classified by Lie superalgebra cohomology.

\section{Future developments: Pseudoforms}

In this section, we set the basis for the ideas underlying future developments; we will be sketchy, leaving the rigorous definitions for a forthcoming analysis. In particular, we want to comment on \emph{pseudoforms} and PCOs related to them: in \cite{Witten}, the author reviews the distributional realisation of integral forms introduced as a module of the Clifford-Weyl algebra defined by differential forms and contractions on a given supermanifold (the complex of integral forms is introduced by acting with contractions, interpreted as \virgolette creation operators'', acting on the \virgolette ground state'' which is a form in the Berezinian sheaf, annihilated by any multiplication with a form). Analogously, he introduces superforms as an inequivalent module of this algebra (now the complex of superforms is introduced by acting via multiplications with 1-forms, interpreted as creation operators, on the ground state 1 annihilated by any contraction). Finally, a third class of modules is defined by introducing a ground state which is annihilated by some contractions and some form multiplications. These ground states define new complexes of forms, obtained by acting on the ground state with the \virgolette complementary'' contractions or form multiplications; the forms in these complexes are called \emph{pseudoforms} and they are realised in terms of forms with a non-maximal and non-zero number of Dirac deltas supported on the even forms $d \theta$s.

Pseudoforms are not yet fully understood from a geometric point of view, as, for example, they do not transform tensorially: let us consider the basic example of $\mathbb{R}^{(0|2)}$, we can write a pseudoform supported on $d \theta^1$ as
\begin{equation}
	\omega = \delta \left( d \theta^1 \right) \ .
\end{equation}
If we consider the transformation $d \theta^1 \mapsto d \theta^1 + d \theta^2$, we obtain a formal series as
\begin{equation}
	\delta \left( d \theta^1 \right) \mapsto \delta \left( d \theta^1 + d \theta^2 \right) = \sum_{p=0}^\infty \frac{\left( d \theta^2 \right)^p}{p!} \delta^{(p)} \left( d \theta^1 \right) \ ,
\end{equation}
where $\displaystyle \delta^{(p)} \left( d \theta \right) \equiv \left( \iota_{\frac{\partial}{\partial \theta}} \right)^p \delta \left( d \theta \right)$. Nevertheless, in the recent \cite{CE2}, pseudoforms have been introduced rigorously in a simplified context: Lie superalgebras. Given a Lie superalgebra $\mathfrak{g}, \dim \mathfrak{g} = (m|n)$ and a Lie sub-superalgebra $\mathfrak{h}, \dim \mathfrak{h} = (p|q)$, one can introduce two infinite-dimensional $\mathfrak{g}$-modules associated to the Berezinian $\mathfrak{h}$-modules of $\mathfrak{h}$ and $\mathfrak{k} = \mathfrak{g}/\mathfrak{h}$:
\begin{equation}
	V_{\mathfrak{h}}^{(p|q)} \defeq \bigoplus_{i=0}^\infty \left( S^i \Pi \mathfrak{h} \otimes \mathpzc{B}er \left( \mathfrak{h} \right) \right) \otimes S^i \Pi \mathfrak{k}^* \ \ , \ \ V_{\mathfrak{k}}^{(m-p|n-q)} \defeq \bigoplus_{i=0}^\infty \left( S^i \Pi \mathfrak{k} \otimes \mathpzc{B}er \left( \mathfrak{k} \right) \right) \otimes S^i \Pi \mathfrak{h}^* \ .
\end{equation}
These modules correspond to pseudoforms associated with a given sub-structure $\mathfrak{h}$ and can be used to produce two pseudoform complexes. In \cite{CE2} the goal was the extension of Chevalley-Eilenberg cohomology to pseudoform complexes and the extension of Hochshild-Serre's spectral sequences construction and some interesting features were found. For example, because of the infinite dimensionality of pseudoform modules, some classical theorems of Lie algebra cohomology do not extend to pseudoforms, as the relation between cohomology classes and invariants (which, indeed, holds in the case of finite-dimensional modules).

The idea behind the construction of pseudoforms in the algebraic context is very simple: pseudoforms arise as integral forms of sub-structures. Notice that this was already suggested in \cite{Manin}, and in \cite{Witten}. Here, we highlight the main idea leaving its realisation to future developments: consider a supermanifold $\mathcal{SM}$, i.e., the data of a topological space $\left| \mathcal{SM} \right|$ and a sheaf of (supercommutative) rings of functions $\mathcal{O}_{\mathcal{SM}}$. Suppose we are given a submanifold $\mathcal{Y}$, we have then the data of a subset $\left| \mathcal{Y} \right| \subseteq \left| \mathcal{SM} \right|$ and its structure sheaf $\mathcal{O}_{\mathcal{Y}}$. We can define the sheaf of ideals $\mathcal{J}_{\mathcal{Y}}$ as the subsheaf of $\mathcal{O}_{\mathcal{SM}}$ consisting of those functions which vanish at $\mathcal{U} \cap \mathcal{Y}$, for all open sets $\mathcal{U}$. We then have the short exact sequence of sheaves
\begin{equation}
	0 \to \mathcal{J}_{\mathcal{Y}} \to \mathcal{O}_{\mathcal{SM}} \to \mathcal{O}_{\mathcal{Y}} \to 0 \ .
\end{equation}
One can introduce integral forms on $\mathcal{Y}$ and our idea is to interpret their extension as pseudoforms in $\mathcal{SM}$. On the level of PCOs, the idea is to define a PCO associated with the sheaf $\mathcal{J}_{\mathcal{Y}}$, in a similar way as the one described in the previous sections, and to define a PCO associated with the reduced manifold $\mathcal{Y}_{red}$ of $\mathcal{Y}$. The PCO defined for $\mathcal{SM}$ should be related to the chain composition of these two. In terms of the distributional realisation, this would mean to factorise (up to a sign) the PCO on $\mathcal{SM}$ as the product $\mathbb{Y}_{\mathcal{SM}} = \pm \mathbb{Y}_{\mathcal{Y} \hookrightarrow \mathcal{SM}} \wedge \mathbb{Y}_{\mathcal{Y}}$.

Up to now, it is not clear under which assumptions on $\mathcal{SM}$ and $\mathcal{Y}$ one can develop this argument; as stated at the beginning of the section, this is left for future developments.

\section*{Acknowledgements}
\noindent I am grateful to P.A. Grassi, B. Jur\v co, S. Noja and R. Noris for their help and their many suggestions on the draft. I am also grateful to A. Caddeo for various discussions on the physical relevance of PCOs. I acknowledge GA\v CR grant EXPRO 19-28628X for financial support.

\end{document}